\newtheorem{theorem}{Theorem}[section]
\newtheorem{lemma}[theorem]{Lemma}
\newtheorem{definition}[theorem]{Definition}
\newtheorem{example}[theorem]{Example}
\newtheorem{remark}[theorem]{Remark}
\title{\textbf{ Cooperating distributed context-free hexagonal array grammar systems with permitting contexts }}
\author {$Sujathakumari K^1$\footnote{email-nksujathakumari@gmail.com} ,\ Dersanambika K.S$^2$\footnote{email-dersanapdf@yahoo.com}\\
\date{}
\small{$^1$ Department of Mathematics}\\\small{S.N College,
Punalur, Kerala 691305, India}\\\small{$^2$ Department of
Mathematics}\\\small{Fatima Mata National College, Kollam 691 001,
India}\\}
\begin{document}
\maketitle
\begin{abstract}{In this paper we associate permitting symbols with rules of Grammars in the components of cooperating
 distributed context-free hexagonal array grammar systems as a control mechanism and investigating the generative power
  of the resulting systems in the terminal mode. This feature of associating permitting symbols with rules when extended
   to patterns in the form of connected arrays also requires checking of symbols, but this is simpler than usual pattern
    matching. The benefit of allowing permitting symbols is that it enables us to reduce the number of components required
     in a cooperating distributed hexagonal array grammar system for generating a set of picture arrays.}
\end{abstract}

\noindent\textbf{Subject Classification: 68RXX}

\noindent\textbf{Keywords: Hexagonal arrays, Cooperating hexagonal
array grammar systems, Generative power}
\section{Introduction}
In the Context of image analysis and image processing a variety of
generative models for digitalized picture arrays in the two
dimensional plane have been proposed \cite{10}. Out of the
different techniques adopted for various models, grammar based
techniques utilize the rich theory of formal grammars and
languages and develop array grammars generating two dimensional
languages whose elements are picture arrays. There are two
distinct types of array grammars, isometric array grammars and
non-isometric array grammars. Since application of rewriting rule
can increase or decrease the length of the rewritten part, the
dimension of rewritten sub array can change in the case of
non-isometric grammars but application of such a rule is shape
preserving in the case of isometric grammars due to the fact that
the left and right sides of an array rewriting rule is
geometrically identical.

In order to handle more context with rewriting systems, a system
with several components is composed and defined a cooperation
protocol for these components to generate a common sentential
form. Such devices are known as cooperating distributed
(CD)grammar systems \cite{3}. Components are represented by
grammars or other rewriting devices, and the protocol for mutual
cooperation modifying the common sentential form  according to
their own rules. A variety of string grammar system models
\cite{3} have been introduced and studied in the literature.
Rudolf Freund extended the concept of grammar system to arrays
\cite{2} by introducing array grammar system and further J.
Dassow, R. Freund and Gh. P$\breve{a}$un elaborated the power of
cooperation in array grammar system  (cooperating array grammar
system) for various non-context-free sets of arrays which can be
generated in a simple way by cooperating array grammar systems and
simple picture description \cite{1}. They also proved that the
cooperation increases the generative capacity even in the case of
systems with regular array grammar components.

Different kinds of control mechanism that are added to component
grammars for regulated rewriting rules have been considered in
string grammar systems and such control devices are known to
increase the generative power of the grammar in many cases
\cite{1}. Random context grammar is viewed as one of the prototype
mechanism in which components grammars that permit or forbid the
application of a rule based on  the presence or absence of a set
of symbols.

Hexagonal arrays and hexagonal patterns are found in the
literature on picture processing and image analysis. The class of
Hexagonal kolam array language (HKAL) was introduced by Siromoneys
\cite{9}. The class of Hexagonal array language was introduced by
Subramanian. The class of local and recognizable picture languages
were introduced  by Dersanambika et.al. \cite{6}. Recently we
extended cooperative distributed grammar system to Hexagonal
arrays and different capabilities of the system are studied
\cite{8}.

In this paper we associate permitting symbols with rules of the
grammar in the components of cooperating distributed context-free
hexagonal array grammar systems as a control mechanism and
investigating the generative power of the resulting systems in the
terminal mode. This feature of associating permitting symbols with
rules when extended to patterns in the form of connected arrays
also requires checking of symbols, but this is simpler than usual
pattern matching. The benefit of allowing permitting symbols is
that it enables us to reduce the number of components required in
a cooperating distributed hexagonal array grammar system for
generating a set of picture arrays.
\section{Preliminaries and definitions}
Let $V$ be a finite non-empty set of symbols. The set of all
hexagonal arrays made up of elements of $V$ is denoted by
$V^{**H}$. The size of the hexagonal array is defined by the
parameters $LU$(left upper), $LL$(left lower), $RU$(right upper),
$RL$(right lower), $U$(upper), $L$(lower) as shown in Figure 1.
For $X\in V^{**H}$ the length of the left upper side of $X$ is
denoted by $\left| X\right|_{LU}$. Similarly we define
$\left|X\right|_{RU}$,$\left|X\right|_{R}$,$\left|
X\right|_{RL}$,$\left| X\right|_{LL}$ and $\left| X\right|_{L}$.

\begin{figure}
    \begin{center}
        \includegraphics[scale=0.4]{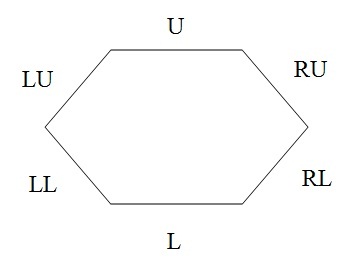}
    \end{center}
    \centerline{Figure 1}
\end{figure}

\begin{definition}An isometric hexagonal array grammar is a construct $G=(N,T,S,P,\#),$ where $N$ and $T$ are disjoint
 alphabets of non terminals and terminals respectively, $S\in N$ is the start symbol, $\#$ is a special symbol called
 blank symbol and $P$ is a finite set of rewriting rules of the form $\alpha\rightarrow\beta$ where $\alpha$ and $\beta$
  are finite subpatterns of a hexagonal pattern over $N\cup T\cup\{\#\}$ satisfying the following conditions:
\begin{enumerate}
\item The shape of $\alpha$ and $\beta$ are identical.
\item $\alpha$ contains at least one element of $N.$ The elements of $T$ appearing in $\alpha$ are not rewritten.
\item A non $\#$ symbol in $\alpha$ is not replaced by a blank symbol in $\beta.$
\item The application of the production $\alpha\rightarrow\beta$ preserves connectivity of the hexagonal array.
\end{enumerate}

For a hexagonal array grammar $G=(N,T,S,P,\#),$ we can define
$x\Longrightarrow y$ for $x,y\in(N\cup T\cup\{\#\})$ if there is a
rule $\alpha\rightarrow\beta\in P$ such that $\alpha$ is a
subpattern of $x$ and $y$ is obtained by replacing $\alpha$ in $x$
by $\beta.$ The reflexive closure of $\Rightarrow$ is denoted by
$\overset{*}{\Rightarrow}.$ The hexagonal array language generated
by $G$ is defined by
$L(G)=\{x\in(T\cup\{\#\})^{**H}:s\overset{*}\Rightarrow x\}.$

\end{definition}
\begin{definition}A hexagonal array grammar is said to be context free if in the rule $\alpha\rightarrow\beta$
\begin{enumerate}
\item non $\#$ symbol in $\alpha$ are not replaced by $\#$ in $\beta$
\item $\alpha$ contain exactly one non-terminal and some occurrences of blank symbol.
\end{enumerate}.The family of languages generated by a context
free hexagonal array grammar is denoted by $CFHA$.
\end{definition}
\vspace{5cm }
\begin{definition}A context free hexagonal array grammar is said to be regular if rules are of the form
$
\begin{array}{*{20}c}
   A & {\#  \to \begin{array}{*{20}c}
   a & B  \\
\end{array}}  \\
\end{array},\begin{array}{*{20}c}
   \#  & {A \to \begin{array}{*{20}c}
   B & a  \\
\end{array}}  \\
\end{array},
$
\[
\begin{array}{*{20}c}
   {\begin{array}{*{20}c}
   {}  \\
   \#   \\
\end{array}} & A  \\
\end{array} \to \begin{array}{*{20}c}
   {\begin{array}{*{20}c}
   {}  \\
   B  \\
\end{array}} & a  \\
\end{array},\begin{array}{*{20}c}
   A & {\begin{array}{*{20}c}
   \#   \\
   {}  \\
\end{array}}  \\
\end{array} \to \begin{array}{*{20}c}
   a & {\begin{array}{*{20}c}
   B  \\
   {}  \\
\end{array}}  \\
\end{array},\begin{array}{*{20}c}
   A & {\begin{array}{*{20}c}
   {}  \\
   \#   \\
\end{array}}  \\
\end{array} \to \begin{array}{*{20}c}
   a & {\begin{array}{*{20}c}
   {}  \\
   B  \\
\end{array},}  \\
\end{array}
\]
$
\begin{array}{*{20}c}
   {\begin{array}{*{20}c}
   \#   \\
   {}  \\
\end{array}} & A  \\
\end{array} \to \begin{array}{*{20}c}
   {\begin{array}{*{20}c}
   B  \\
   {}  \\
\end{array}} & {a,A \to a}  \\
\end{array}
$

The family of languages generated by a regular hexagonal array grammar is denoted by $REGHA.$
\end{definition}
\begin{definition}A cooperating hexagonal array grammar system (of type $X,X\in\{CFHA,REGHA\},$ and degree $n,n\ge 1$), is a construct $\Gamma=(N,T,S,P_1,P_2,\ldots,P_n),$ where $N$ and $T$ are non-terminal and terminal alphabets respectively,
$S \in N$ and $P_1,P_2,\ldots,P_n$ are finite sets of regular
respectively context free rules over $N\cup T.$
\end{definition}
\begin{definition}Let $\Gamma$ be a cooperating hexagonal array grammar system. Let $x,y\in \Gamma^*.$ Then we write $x\underset{p_i}{\overset{k}{\Rightarrow}}y$ if and only if there are words $x_1,x_2,\ldots,x_{k+1}$ such that
\begin{enumerate}
\item $x=x_1,y=x_{k+1},$ \item $x_j\underset{p_i}\Rightarrow
x_{j+1},$ that is,
$x_j=x_j'A_jx_j'',\,x_{j+1}=x_j'W_jx_j'',\,A_j\rightarrow W_j\in
P_i,\,$ $1\le j\le k.$
\end{enumerate}

Moreover, we write\\
\indent\indent$x\underset{P_i}{\overset{\le k}{\Rightarrow}}y$ if and only if $x\underset{P_i}{\overset{k'}{\Rightarrow}}y$ for some $k'\le k,$\\
\indent\indent $x\underset{P_i}{\overset{\ge k}{\Rightarrow}}y$ if and only if $x\underset{P_i}{\overset{\le k'}{\Rightarrow}}y$ for some $k'\ge k,$\\
\indent\indent $x\underset{P_i}{\overset{*}{\Rightarrow}}y$ if and only if $x\underset{P_i}{\overset{k}{\Rightarrow}}y$ for some $k$\\
\indent\indent $x\underset{P_i}{\overset{t}{\Rightarrow}}y$ if and only if $x\underset{P_i}{\overset{*}{\Rightarrow}}y$ and there is no $z\ne y$ with $y\underset{P_i}{\overset{*}{\Rightarrow}}z.$

By $CD_n(X,f)$ we denote family of hexagonal array language
generated by cooperating hexagonal array grammar system consisting
of at most $n$ components of type $X\in(REGHA,CFHA)$ in the mode
$f.$
\end{definition}
\begin{definition}
A random context grammar is a quadruple $G=(N,T,P,S)$ where $N$ is
the alphabet of non-terminals, $T$ is the alphabet of terminals
such that $N\cap T=\phi, V=N\cup T,S\in N$ is the start symbol,and
$P$ is a finite set of productions of the form $(A\rightarrow
x,Per,For)$ where $A\rightarrow x$ is a context free
production,$A\in N$ and $x\in V^{\ast}$,and $Per,For\subseteq N$.
For $U,V\in V^{\ast}$ and a production $(A\rightarrow
x,Per,For)\in P$, the relation $Av\Rightarrow uxv$ holds provided
that $per\subseteq alph(uv)$ and $alph(uv)\cap For=\phi$. A
permitting (forbidding)grammar is a random context grammar
$G=(N,T,P,S)$ where for each production\\ $(A\rightarrow
x,Per,For)\in P$, it holds that $For=\phi ~(Per=\phi~
respectively)$
\end{definition}

\section {Cooperating distributed context-free hexagonal array grammar system with permitting symbols}

The set of all symbols in the labeled cells of the array $p$ is denoted by alph$(p)$ A permitting CF hexagonal
 array rule is an array grammar $G$ is of the form $(\alpha\rightarrow\beta,per),$ where $\alpha\rightarrow\beta$
 is a context-free hexagonal array rewriting rule and $per\subseteq N$, where $N$is the set of non -terminals of the grammar.
 If $per=\phi$, then we avoid mentioning it in the rule.For any two arrays $p,q$ and a  permitting CF hexagonal array rule
 $(\alpha\rightarrow\beta,per)$, the array $q$ is derived from $p$ by replacing $\alpha$ in $p$ by $\beta$ provided that
 $per \subseteq alph(p\backslash\alpha)$. A permitting cooperating distributed context-free hexagonal array grammar systems
 (pCDCFHAGS) is $\Gamma=(N,T,P_1, P_2 ,\ldots,P_N,S),\,n\geq 1$ where $N$ is a finite  set of non terminals,$S\in N$ is
 the start symbol,$T$ is a finite set of terminals, $N\bigcap T=\phi$ and each $P_i$, for $1\leq i \leq n $ is a finite set
 of permitting CF hexagonal array rewriting rules.

For any two hexagonal arrays $p,q$, we denote $p\underset{p_i}{\overset{t}{\Rightarrow}}q$
an array rewriting step performed by applying a permitting cooperating CF Hexagonal array rule in $p_i,1\leq i \leq n$,
 and by $p\underset{p_i}{\overset{*}{\Rightarrow}}q$  the transitive closure of $p\underset{p_i}{\overset{t}{\Rightarrow}}q.$
 Also we say that the array $p$ derives an array $q$ in the terminal mode or $t$ mode and write $p\underset{p_i}{\overset{t}{\Rightarrow}}q$, if$p\underset{p_i}{\overset{*}{\Rightarrow}}q,$ and there is no array $s$ such that $q\underset{p_i}{\overset{t}{\Rightarrow}}s$ The array language generated by $\Gamma$ in the $t$ mode is defined as
$L(\Gamma)=\{q\colon s=
p_o\underset{p_{i_1}}{\overset{t}{\Rightarrow}}
p_1\underset{p_{i_2}}{\overset{t}{\Rightarrow}}
p_2,\ldots,\underset{p_{i_m}}{\overset{t}{\Rightarrow}} p_m =q\in
V^{**H},\,m \geq 1,\,i_j\in \{1,2,\ldots,n\}$ for $1\le j\le m\}.$

Note that $i_1,i_2,\ldots,i_m$ is any sequence of symbols belonging to $\{1,2,\ldots,n\}$ where repeated symbols are allowed.
 Also $pCD_n(HCFA,t)$ denote the family of array languages generated in the $t$ mode by permitting cooperating CF hexagonal array grammar systems with at most $n$ components
\begin{example}Consider the context-free hexagonal array grammars with rules
$G=(\{S,A,B\},\{a\},P,S,\#)$ where
\[
\begin{array}{l}
 P = \left\{ {\begin{array}{*{20}c}
   {1)\;S} & {\begin{array}{*{20}c}
   \#   \\
   \#   \\
\end{array}}  \\
\end{array}} \right. \Rightarrow \begin{array}{*{20}c}
   a & {\begin{array}{*{20}c}
   A  \\
   B  \\
\end{array}}  \\
\end{array},\begin{array}{*{20}c}
   {2)\;A} & {\begin{array}{*{20}c}
   \#   \\
   {}  \\
\end{array}}  \\
\end{array} \to \begin{array}{*{20}c}
   a & {\begin{array}{*{20}c}
   {A'}  \\
   {}  \\
\end{array}}  \\
\end{array}, \\
 \left. {\begin{array}{*{20}c}
   {3)\;B} & {\begin{array}{*{20}c}
   {}  \\
   \#   \\
\end{array}}  \\
\end{array} \to \begin{array}{*{20}c}
   a & {\begin{array}{*{20}c}
   {}  \\
   {B'}  \\
\end{array}}  \\
\end{array},\;4)\;A' \to A,\;5)\;B' \to B,\;6)\;A \to a,\;7)\;B \to a} \right\} \\
 \end{array}
\]
\[
\begin{array}{*{20}c} \\
   {} & {} & {} & {} & {} & {} & {} & {}  \\
   {} & {} & {} & {} & {} & a & {} & {}  \\
   {} & {} & {} & {} & a & {} & {} & {}  \\
   {} & {} & {} & . & {} & {} & {} & {}  \\
   {} & {} & . & {} & {} & {} & {} & {}  \\
   {} & a & {} & {} & {} & {} & {} & {}  \\
   a & {} & {} & {} & {} & {} & {} & {}  \\
   {} & a & {} & {} & {} & {} & {} & {}  \\
   {} & {} & a & {} & {} & {} & {} & {}  \\
   {} & {} & {} & {} & {} & {} & {} & {}  \\
\end{array}
\]
\centerline{Figure 2}
 \par $G$ generates hexagonal arrays over $\{a\}$
is in the shape of left arrow head  but size of left upper arm and
left lower arm are not necessarily equal.

\end{example}
\begin{example}
The pCDCFHAGS
$G_{1}=(\{S,A,B,A',B',C,D\},\{a\},P_{1},S)$ where $P_{1}$ consists  of the following rules.
\[
\begin{array}{l}
 1.\begin{array}{*{20}c}
   S & {\begin{array}{*{20}c}
   \#   \\
   \#   \\
\end{array}}  \\
\end{array} \Rightarrow \begin{array}{*{20}c}
   a & {\begin{array}{*{20}c}
   A  \\
   B  \\
\end{array}}  \\
\end{array}, \\\\
 2.\left( {\begin{array}{*{20}c}
   A & {\begin{array}{*{20}c}
   \#   \\
   {}  \\
\end{array}}  \\
\end{array} \to \begin{array}{*{20}c}
   a & {\begin{array}{*{20}c}
   {A'}  \\
   {}  \\
\end{array},}  \\
\end{array}\left\{ B \right\}} \right), \\\\
 3.\left( {\begin{array}{*{20}c}
   B & {\begin{array}{*{20}c}
   {}  \\
   \#   \\
\end{array} \to \begin{array}{*{20}c}
   a & {\begin{array}{*{20}c}
   {}  \\
   {B'}  \\
\end{array},\left\{ {A'} \right\}}  \\
\end{array}}  \\
\end{array}} \right), \\\\
 4.\left( {A' \to A,\left\{ {B'} \right\}} \right), \\\\
 5.\left( {B' \to B,\left\{ A \right\}} \right), \\\\
 6.\left( {A \to C,\left\{ B \right\}} \right), \\\\
 7.\left( {B \to D,\left\{ C \right\}} \right), \\\\
 8.C \to a, \\\\
 9.D \to a \\
 \end{array}
\]
generates (in the $t$ mode) the set $H_{LA}$ of all arrays over
$\{a\}$ in the shape of a left arrow head with left upper arm and
left lower arm are equal in size (Figure 3).
\[
\begin{array}{*{20}c}
   {} & {} & {} & a & {} & {} & {}  \\
   {} & {} & a & {} & {} & {} & {}  \\
   {} & a & {} & {} & {} & {} & {}  \\
   a & {} & {} & {} & {} & {} & {}  \\
   {} & a & {} & {} & {} & {} & {}  \\
   {} & {} & a & {} & {} & {} & {}  \\
   {} & {} & {} & a & {} & {} & {}  \\
\end{array}
\]
\centerline{Figure 3} The derivations starts with rule 1 followed
by rule 2 which can be applied as the permitting symbol $B$ is
present in the array. This grows left upper arm (LU) by one cell.
Then rule 3 can be applied due to the presence of permitting
symbol $A'$ and this grows left lower arm (LL) by one cell. An
application of rule 4 followed by 5, again noting that the
permitting symbols of the respective rules are present changes
$A'$ to $A$ and $B'$ to $B$. the repeated application of the
process growing both the left upper arm and left lower arm equal
in size. Rules 6 and 7 are applied changing $A$ to $C$ and $B$ to
$D$ so that the derivation can be terminated by the application of
rules 8 and 9 thus yielding a hexagonal array in the shape of a
left arrow head with size of $LU$ and $LL$ are equal.
\end{example}
%To show the power of the cooperating hexagonal array grammar system with the array rules controlled by permitting symbols, consider the following example of a language of set of all hexagons with parallel arms are equal in length over a one letter alphabet.

\begin{remark}A $pCDCFHAGS$ where the set of permitting symbols in all the components is empty, is simply a cooperating distributed $CF$ hexagonal array system ($CDCFHAGS$)\cite{8}. The family of array languages generated in the $t$ mode by a $CDCFHAGS$ with at most $n$ components is denoted by $CD_n(CFHA,t).$ If the rules in all the components are only in the form of rules of a regular array grammar, then this family is denoted by $CD_n(REG
HA,t).$
\end{remark}
We now show that the set $H_{HF}$ of all $n\times n\times n(n\ge
3)$ arrays over $\{a\}$ in the form of hollow hexagonal frame.
Figure 3 can be generated (in the $t$ mode) by a $pCDCFHAGS$ with
only two components.

\begin{lemma} $H_{HF}\in pCD_2(HCFA,t).$
\end{lemma}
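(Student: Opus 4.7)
The plan is to exhibit a two-component permitting CD grammar system $\Gamma=(N,\{a\},P_1,P_2,S)$ whose $t$-mode language is exactly $H_{HF}$. The nonterminal alphabet will include six corner markers $A_1,\ldots,A_6$ (one placed at each vertex of the hexagonal frame), primed copies $A'_1,\ldots,A'_6$ used as synchronization flags, plus a small set of ``finishing'' markers. The startup rule in $P_1$ replaces $S$ by the smallest hollow hexagonal frame (size $n=3$): six boundary cells labelled by $A_1,\ldots,A_6$ in cyclic order with $a$'s on the straight portions of each side and $\#$'s in the interior.

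Next, I would organize the iterative growth in two interlocked phases. Component $P_1$ contains three growth rules, one for each of three alternating sides (say those anchored at $A_1,A_3,A_5$); each rule is an isometric CF hexagonal rule of the form ``rewrite $A_i$ together with an adjacent $\#$ in the direction along the corresponding side into $a$ and $A'_i$'', with permitting context $\{A_2,A_4,A_6\}$ (or a singleton drawn from that set). In $t$-mode, $P_1$ can fire each such rule at most once (the corner it uses becomes primed, and no rule in $P_1$ touches primed symbols), producing an intermediate array in which three alternating sides have grown by exactly one cell and the three corresponding corners are primed. Component $P_2$ is designed symmetrically: three analogous rules $A_j\#\to aA'_j$ for $j\in\{2,4,6\}$, each permitted by some primed $A'_i$ from $P_1$'s output so that $P_2$ cannot start prematurely, followed by six ``reset'' rules $A'_k\to A_k$ each permitted by a primed symbol of the complementary class. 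After $P_2$ exhausts itself all six sides have grown by one cell and all corners are unprimed again, so the cycle can repeat.

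Termination is handled by a ``finishing chain'' placed inside $P_2$: a cyclic sequence of rules $A_1\to a$ permitted by $\{A_2\}$, $A_2\to a$ permitted by $\{A_3\}, \ldots, A_5\to a$ permitted by $\{A_6\}$, and a last rule $A_6\to a$ with empty permitting context. Non-determinism in $t$-mode allows this chain to start instead of a growth cycle after any number of $P_1$--$P_2$ alternations; because each finishing rule strictly requires the next corner's unprimed symbol, the chain must be executed in order until $A_6\to a$ fires, and because every growth rule in $P_2$ requires a primed corner that the reset rules have already destroyed, no growth rule in $P_2$ can be applicable while the finishing chain is running. The resulting array is a pure terminal hollow hexagonal frame with all six sides of the same length $n\ge 3$.

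The main obstacle is precisely the interaction between the finishing chain and the growth rules: I must make sure that once any corner has collapsed to $a$, the remaining growth rules in $P_2$ become inapplicable (so $t$-mode reaches a terminal array), while simultaneously permitting growth to continue freely before the chain is invoked. Introducing one extra synchronization flag per corner, and making the growth rules require both the complementary unprimed corner and at least one primed partner, resolves the conflict and also prevents ``half-grown'' arrays with unequal side lengths from becoming terminal. The forward inclusion (every terminal derivation yields an element of $H_{HF}$) then follows from an induction on the number of $P_1$--$P_2$ cycles, while the reverse inclusion (every $n\times n\times n$ frame is derivable) follows by running $n-3$ growth cycles and then invoking the finishing chain.
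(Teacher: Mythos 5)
There is a genuine gap, and it is geometric rather than combinatorial. Your plan starts from a complete hollow frame of size $3$ and ``inflates'' it cycle by cycle, each cycle adding one cell to every side. But the rewriting here is isometric and monotone: a rule $\alpha\to\beta$ replaces a subpattern by one of identical shape, a non-$\#$ symbol may never be replaced by $\#$, and no cell can be relocated. Consequently the set of labelled cells of the sentential form only grows, and the final array contains every cell that was ever labelled. A hollow hexagonal frame of size $n$ is not a subarray of a hollow hexagonal frame of size $n+1$: even if you anchor both at the left vertex so that the $LU$ and $LL$ arms of the smaller frame are prefixes of those of the larger one, the upper, lower, $RU$ and $RL$ arms of the smaller frame lie strictly inside the larger frame's hole. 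So after even one growth cycle your derivation carries labelled cells in the interior that can never be erased, and the terminal array is not a member of $H_{HF}$. No choice of permitting contexts or synchronization flags can repair this; the approach of passing through complete intermediate frames is unworkable in this model.

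The paper avoids the problem by never closing the frame until the very end: it traces the boundary as two open arcs growing from the left vertex, one head walking $LU\to U\to RU$ via the pairs $A/A'$, $C/C'$, $E/E'$ and the other walking $LL\to L\to RL$ via $B/B'$, $D/D'$, $F/F'$. The permitting symbols force the two heads to advance in strict alternation, which is what makes each pair of parallel arms come out equal, and the transitions between arms (rules 6--7, 12--13, 18--19) are synchronized the same way. The second component is used only to terminate (rewriting the final markers $X$, $Y$, $F$ to $a$) once the first component has blocked itself, which is why two components suffice. If you want to salvage your idea, you would have to reformulate ``growing a side'' as extending an open arm at its free end (as the paper does), at which point you are essentially reconstructing the paper's proof; the six-corner inflation picture itself cannot be made to work.
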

\begin{proof}The set $H_{HF}$ is generated (in the $t$ mode) by the
$pCDCFHAGS\,\,G_2=(\{S,A,B,A',B',C,D,C',D',E,F,E',F',X,Y\},\{a\},P_1,P_2,S)$.
 The rules in the component$P_1$ are given by
\[
\begin{array}{l}
 1.\begin{array}{*{20}c}
   S & {\begin{array}{*{20}c}
   \#   \\
   \#   \\
\end{array}}  \\
\end{array} \Rightarrow \begin{array}{*{20}c}
   a & {\begin{array}{*{20}c}
   A  \\
   B  \\
\end{array}}  \\
\end{array}, \\
 2.\left( {\begin{array}{*{20}c}
   A & {\begin{array}{*{20}c}
   \#   \\
   {}  \\
\end{array}}  \\
\end{array} \to \begin{array}{*{20}c}
   a & {\begin{array}{*{20}c}
   {A'}  \\
   {}  \\
\end{array},}  \\
\end{array}\left\{ B \right\}} \right), \\
 3.\left( {\begin{array}{*{20}c}
   B & {\begin{array}{*{20}c}
   {}  \\
   \#   \\
\end{array} \to \begin{array}{*{20}c}
   a & {\begin{array}{*{20}c}
   {}  \\
   {B'}  \\
\end{array},\left\{ {A'} \right\}}  \\
\end{array}}  \\
\end{array}} \right), \\\\
 4.\left( {A' \to A,\left\{ {B'} \right\}} \right), \\\\
 5.\left( {B' \to B,\left\{ A \right\}} \right), \\\\
 6.\left( {\begin{array}{*{20}c}
   {A'} & {\#  \to \begin{array}{*{20}c}
   a & {C,\left\{ {B'} \right\}}  \\
\end{array}}  \\
\end{array}} \right), \\\\
 7.\left( {\begin{array}{*{20}c}
   {B'} & {\#  \to \begin{array}{*{20}c}
   a & {D,\left\{ C \right\}}  \\
\end{array}}  \\
\end{array}} \right), \\\\
 8.\left( {\begin{array}{*{20}c}
   C & {\#  \to \begin{array}{*{20}c}
   a & {C',\left\{ D \right\}}  \\
\end{array}}  \\
\end{array}} \right), \\\\
 9.\left( {\begin{array}{*{20}c}
   D & {\#  \to \begin{array}{*{20}c}
   a & {D',\left\{ {C'} \right\}}  \\
\end{array}}  \\
\end{array}} \right), \\\\
 10.\left( {C' \to C,\left\{ {D'} \right\}} \right), \\\\
 11.\left( {D' \to D,\left\{ C \right\}} \right), \\\\
 12.\left( {\begin{array}{*{20}c}
   {C'} & {\begin{array}{*{20}c}
   {}  \\
   \#   \\
\end{array} \to \begin{array}{*{20}c}
   a & {\begin{array}{*{20}c}
   {}  \\
   E  \\
\end{array},\left\{ {D'} \right\}}  \\
\end{array}}  \\
\end{array}} \right), \\\\
13.\left( {\begin{array}{*{20}c}
   {D'} & {\begin{array}{*{20}c}
   \#   \\
   {}  \\
\end{array} \to \begin{array}{*{20}c}
   a & {\begin{array}{*{20}c}
   F  \\
   {}  \\
\end{array},\left\{ E \right\}}  \\
\end{array}}  \\
\end{array}} \right), \\\\
 14.\left( {\begin{array}{*{20}c}
   E & {\begin{array}{*{20}c}
   {}  \\
   \#   \\
\end{array} \to \begin{array}{*{20}c}
   a & {\begin{array}{*{20}c}
   {}  \\
   {E'}  \\
\end{array},\left\{ F \right\}}  \\
\end{array}}  \\
\end{array}} \right), \\\\
 15.\left( {\begin{array}{*{20}c}
   F & {\begin{array}{*{20}c}
   \#   \\
   {}  \\
\end{array} \to \begin{array}{*{20}c}
   a & {\begin{array}{*{20}c}
   {F'}  \\
   {}  \\
\end{array},\left\{ {E'} \right\}}  \\
\end{array}}  \\
\end{array}} \right), \\\\
16.\left( {E' \to E,\left\{ {F'} \right\}} \right), \\
 \end{array}
 \]
 \[
 \begin{array}{l}
 17.\left( {F' \to F,\left\{ E \right\}} \right), \\\\
 18.\left( {\begin{array}{*{20}c}
   E & {\begin{array}{*{20}c}
   {}  \\
   \#   \\
\end{array} \to \begin{array}{*{20}c}
   a & {\begin{array}{*{20}c}
   {}  \\
   X  \\
\end{array},\left\{ F \right\}}  \\
\end{array}}  \\
\end{array}} \right), \\\\
 19.\left( {\begin{array}{*{20}c}
   F & {\begin{array}{*{20}c}
   \#   \\
   {}  \\
\end{array} \to \begin{array}{*{20}c}
   a & {\begin{array}{*{20}c}
   Y  \\
   {}  \\
\end{array},\left\{ X \right\}}  \\
\end{array}}  \\
\end{array}} \right), \\
 \end{array}
\]The rules in the component $P_2$ are given by
\[
P_2  = \left\{ {X \to a,Y \to a,F \to a} \right\}
\]
\[
\begin{array}{*{20}c}
   {} & {} & {} & {} & a & {} & a & {} & a & {} & a & {} & a & {} & {} & {} & {}  \\
   {} & {} & {} & a & {} & {} & {} & {} & {} & {} & {} & {} & {} & a & {} & {} & {}  \\
   {} & {} & a & {} & {} & {} & {} & {} & {} & {} & {} & {} & {} & {} & a & {} & {}  \\
   {} & a & {} & {} & {} & {} & {} & {} & {} & {} & {} & {} & {} & {} & {} & a & {}  \\
   a & {} & {} & {} & {} & {} & {} & {} & {} & {} & {} & {} & {} & {} & {} & {} & a  \\
   {} & a & {} & {} & {} & {} & {} & {} & {} & {} & {} & {} & {} & {} & {} & a & {}  \\
   {} & {} & a & {} & {} & {} & {} & {} & {} & {} & {} & {} & {} & {} & a & {} & {}  \\
   {} & {} & {} & a & {} & {} & {} & {} & {} & {} & {} & {} & {} & a & {} & {} & {}  \\
   {} & {} & {} & {} & a & {} & a & {} & a & {} & a & {} & a & {} & {} & {} & {}  \\
\end{array}
\]
\centerline{Figure 4} Using $t$- mode of derivation, starting with
the symbol $S$ an application of rule (1) in the first component
followed by rule(2) which can be applied as the permitting symbol
$B$ is present in the array, grows in the $LU$ arm by one place.
Since $A'$ is the permuting symbol for rule (3), rule (3) can then
applied which results the growth of $LL$ arm by one place. Now the
situations are ready for applying rules (4) and (5) and at this
stage $A'$ becomes $A$ and $B'$ becomes $B$. The process can be
repeated and this in turn results the growth of $LU$ and $LL$ arms
equal in length. Instead of rule (4) rule (6) is applied followed
by (7),(8),(9),(10),(11) allows upper and lower arms to grow in
equal length, with permitting symbols in all these rules directing
the sequence of applications in the right order. If rule (12) is
used instead of rule (10) and this is followed by rule (13) Right
upper ($RU$) and Right lower ($RL$) arms grows equal in size and
correct application of rule (18) and (19) will result in the
symbol $X$ in the $RU$ arm and $F$ in the lower right arm  where
$X$ is at the position of right end point of $RU$ and $RL$ arm so
that further application of productions in $P_{1}$ is not possible
at any non-terminals. At this stage applying productions in
$P_{2}$ and this in turn terminate the derivation yielding a
hollow hexagon with its parallel arms are equal in size.
\end{proof}

\begin{lemma}
\begin{enumerate}
\item $H_{HF}\in CD_{3}(CFHA,t)$ \item $H_{HF}\notin
CD_{n}(REHA,t)~\mbox{for}~n\geq 1$
\end{enumerate}
\end{lemma}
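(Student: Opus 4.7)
The statement has two parts, which I would prove independently.

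For part (i), $H_{HF}\in CD_3(CFHA,t)$, the plan is to re-engineer the two-component pCDCFHAGS $G_2$ of Lemma~3.1 into a three-component CFHA system whose $t$-mode halting conditions replace the permitting contexts. I would partition the rules by the three pairs of opposite arms: component $P_1'$ initiates and grows the pair (LU, LL); $P_2'$ grows (U, L); $P_3'$ grows (RU, RL) and performs the terminal conversions. Within each $P_i'$, the symmetric growth of its two arms is enforced by the flip/reset mechanism already present in $G_2$ (a growth rule $A\#\to aA'$ followed by a reset $A'\to A$), together with exit rules that transform $A,B$ into labels that only $P_{i+1}'$ can process. Since $P_i'$ in $t$-mode must run until no rule applies, it halts precisely when every growth-eligible non-terminal has been converted to an exit label; the flip/reset rules must be designed so that reaching this state requires equally many growth-then-reset cycles on the two arms. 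Activating the components in the order $P_1',P_2',P_3'$ then produces exactly $H_{HF}$.

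For part (ii), $H_{HF}\notin CD_n(REGHA,t)$ for any $n\ge 1$, I would argue by contradiction using a structural pumping lemma. The crucial observation is that every regular rule rewrites a single non-terminal together with one adjacent blank or performs $A\to a$, so at most one non-terminal is ever present in a sentential form, and this unique non-terminal behaves as a walker on the hexagonal lattice whose finite-alphabet label encodes its current compass direction and phase. A successful derivation under a grammar $\Gamma$ therefore traces a single walk that deposits the entire perimeter of the output array, and the $t$-mode run of a single component corresponds to a sub-walk consisting of a bounded number of straight segments delineated by label changes. Assuming $L(\Gamma)=H_{HF}$ and choosing $m$ much larger than the product of $n$ with $|N|$, I would examine the derivation of $p_m$ along the walker's visit to the LU arm: by pigeonhole, the combined data (walker label, active component, local phase) must repeat at two positions along LU, and splicing an extra copy of the segment between these positions yields a new valid $t$-mode derivation whose output is a closed hexagonal walk with LU strictly longer than the other arms, contradicting $L(\Gamma)=H_{HF}$.

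The main obstacle for part (i) is the careful design of the flip/reset/exit rules in each $P_i'$ so that unequal-length halting is structurally impossible; one needs the only $t$-mode deadlock of $P_i'$ to coincide with both arms having passed through the same number of growth cycles. The main obstacle for part (ii) is to make the splicing precise: one must verify that the $t$-mode halting condition of every component activation inside the spliced block is preserved, that the walker in the extended derivation does not collide with the rest of the frame prematurely (which requires $m$ large enough that distinct arms are spatially well separated), and that the walker's eventual closure through LL absorbs the extra length, so that the pumped derivation really does produce an array in $L(\Gamma)$ whose arm lengths violate the $H_{HF}$ equality constraint.
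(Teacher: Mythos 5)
First, note that the paper offers no argument for this lemma at all; it simply defers to the companion paper \cite{8}, so your sketch has to be judged on its own merits rather than against a written proof. On that basis, part (i) of your proposal contains a genuine flaw, not merely an ``obstacle.'' You propose to give each component $P_i'$ one pair of opposite arms together with both the growth rules ($A\#\to aA'$, $B\#\to aB'$) and the reset rules ($A'\to A$, $B'\to B$), and to rely on the $t$-mode halting condition to force equally many growth--reset cycles on the two arms. This cannot work: the rules are context-free, so whether a rule is applicable to the head $A$ depends only on $A$ and its blank neighbourhood, never on the state of the other head $B$. Consequently, from any configuration the component can perform $k+1$ grow--reset cycles on one arm and only $k$ on the other before converting both heads to exit labels, and the resulting configuration is just as dead (no $P_i'$-rule applicable) as the synchronized one. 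The $t$-mode condition constrains only the final configuration, not the interleaving, and geometric closure of the frame imposes only two linear relations on the six arm lengths, so it cannot substitute for counting either. The synchronization in a CD system without permitting symbols must come from splitting the grow phase and the reset phase into \emph{different} components, so that each activation of the ``grow'' component is forced by maximality to advance every head exactly once before control can pass to the ``reset'' component; a third component then performs termination. That is the construction your $P_1',P_2',P_3'$ should implement, and it is surely what \cite{8} does; your decomposition by arm-pairs does not achieve it.

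Part (ii) is on the right track and is essentially a rigorous version of the informal remark the paper itself makes in Theorem 3.3(3): a regular sentential form carries exactly one non-terminal, which acts as a finite-state walker laying down a self-avoiding path, so the whole hollow frame must be traced as a single circuit and a state-repetition/pumping argument along a long arm applies. The points you flag are indeed the ones that need care: the pumped walk either fails to close (so the output is connected but not a frame, already a contradiction with $L(\Gamma)=H_{HF}$) or collides with its own earlier trail (so that the derivation blocks and one must check it cannot be completed to a terminal array of $H_{HF}$); and the repeated ``state'' must include the active component and its position within the current $t$-mode activation, as you say. With those details supplied, part (ii) is sound. Part (i) needs the component decomposition replaced as described above.
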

Proof follows from the result in \cite{8}.
\\\\\\
\begin{theorem}
\begin{enumerate}
\item $CFHA= CD_{1}(CFHA,t) \subset pCD_{1}(CFHA,t).$ \item
$CD_{2}(CFHA,t) \subset pCD_{2}(CFHA,t).$ \item $
pCD_{2}(CFHA,2)\setminus CD_{n}(REGHA,t) \neq\phi $ for any $n\geq
1.$
\end{enumerate}
\begin{proof}
The equality follows from the results from \cite{8}. We know that,
a $pCDCFHAGS$ with empty set of permitting symbols associated with
the rules is same as a cooperating distributed $CF$ hexagonal
array grammar system and so $CD_{1}(CFHA,t) =pCD_{1}(CFHA,t)$ if
$per=\phi$. Examples (1) and (2) illustrated the fact that the set
$H_{LA}$ of all hexagonal arrays over ${a}$ in the shape of left
arrow head with left upper arm and left lower arm with equal size
is generated by the $pCD_{1}CFHAGS$ with only one component and
working in the $t$-mode and hence the inclusion is proper which
proves (1). Similar arguments for inclusion in statement(2) are
hold. From the proof of the lemma(1) it is very clear that under
the strict application of the derivation rules with the respective
permitive symbols generate a hollow hexagon with parallel arms
equal in size. Such a generation is not possible in $
CD_{2}CFHAGS$ since $ per =\phi$  and incorrect application of
terminating rule will leads to non-completion of the hollow
hexagon with parallel arms equal in size. Thus $CD_{2}(CFHA,t)
\subset pCD_{2}(CFHA,t).$

Consider the array languages generated by $pCDCFHAGS$ in example
(1) and in the proof of lemma (1). It can be seen that generating
the arrow head of patterns of the language, we should require two
growing heads at the same time. But in the $ CDREGHAGS$ with any
number of components the array rules contains only one growing
head. So the same language cannot be generated by
$CD_{n}(REGHA,t)$ and this proves (3).
\end{proof}
\end{theorem}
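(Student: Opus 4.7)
The plan is to treat each of the three claims separately, leveraging the examples and lemmas already established in the paper (Example~2, Lemma~1, and Lemma~2). For the equality $CFHA = CD_{1}(CFHA,t)$ in~(1), I would unfold the definitions: a $pCDCFHAGS$ with a single component $P_{1}$ and with every permitting set $per$ equal to $\emptyset$ is, by definition, the same formal object as a single context-free hexagonal array grammar, and the $t$-mode derivation in a one-component system just iterates rules until no rule applies, which is exactly $\overset{*}{\Rightarrow}$ for the underlying grammar. Thus $CFHA \subseteq CD_{1}(CFHA,t) \subseteq pCD_{1}(CFHA,t)$, with the first inclusion being an equality. For the strict inclusion $CD_{1}(CFHA,t) \subsetneq pCD_{1}(CFHA,t)$, I would take the set $H_{LA}$ of left arrow heads with equal arms from Example~2 as the witness: Example~2 already realizes $H_{LA}$ by a single-component $pCDCFHAGS$, and Example~1 shows that if we drop the permitting symbols, the natural CF grammar generates arrow heads of \emph{arbitrary} (unequal) arm lengths. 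I would argue that no ordinary CF hexagonal array grammar can enforce the equal-length condition, because once a single nonterminal has been rewritten along the $LU$ arm, context-free derivations in isolation provide no synchronization between the $LU$- and $LL$-growing branches.

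For~(2), the inclusion $CD_{2}(CFHA,t) \subseteq pCD_{2}(CFHA,t)$ is again obtained by taking every $per = \emptyset$. For strictness I would use the hollow-hexagon language $H_{HF}$: by Lemma~1 we already have $H_{HF} \in pCD_{2}(HCFA,t)$. To certify that $H_{HF} \notin CD_{2}(CFHA,t)$, I would mimic the counting/synchronization argument hinted at in the paper's discussion of Lemma~1: a hollow hexagonal frame requires six arms to be grown in pairs of equal length, and in any $CDCFHAGS$ with only two components and no permitting symbols, the $t$-mode switch between components is the only coordination device available; I would show, by a case analysis on which nonterminals appear in the sentential form when control passes between $P_{1}$ and $P_{2}$, that such a system can either grow one pair of arms correctly or switch to termination, but cannot cycle through all three pairs while maintaining the six equalities. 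This is essentially the content behind Lemma~2(1) saying $H_{HF} \in CD_{3}(CFHA,t)$ but not fewer; I would appeal to the cited result in~\cite{8} for the sharpness of the component bound.

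For~(3), my witness would be the same $H_{HF}$ (or, alternatively, $H_{LA}$ with equal arms from Example~2). By Lemma~1 we have $H_{HF} \in pCD_{2}(CFHA,t)$, and by Lemma~2(2) we have $H_{HF} \notin CD_{n}(REGHA,t)$ for any $n \geq 1$; these two facts together give $H_{HF} \in pCD_{2}(CFHA,t) \setminus CD_{n}(REGHA,t)$, which is the desired nonemptiness. The conceptual reason, to be spelled out briefly, is the one already indicated in the paper: the regular hexagonal array rules listed in Definition~3 each rewrite a single nonterminal together with a single adjacent blank cell, so at any moment there is exactly one ``growing head'' in the sentential form, whereas the pictures in $H_{HF}$ demand the simultaneous growth of several arms of coordinated lengths.

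The main obstacle will be the lower bound in~(2), namely the assertion $H_{HF} \notin CD_{2}(CFHA,t)$. The paper's proof glosses over it with ``similar arguments,'' but a clean argument really requires showing that two context-free components, even under $t$-mode alternation, cannot orchestrate the three pairwise length equalities needed for a hollow hexagonal frame. I would therefore allocate most of the proof's space to a careful invariant-based argument on the configurations reachable after each $t$-mode step in a hypothetical $2$-component generator of $H_{HF}$, reducing to a contradiction from the context-freeness of each component. The other two parts, by contrast, reduce essentially to quoting the examples/lemmas already proved and stating the routine inclusions.
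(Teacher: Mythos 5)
Your proposal is correct and follows essentially the same route as the paper: the same witness languages ($H_{LA}$ from Examples 1--2 for part (1), $H_{HF}$ from Lemma 3.3 for parts (2) and (3)), the same trivial inclusions obtained by setting every $per=\emptyset$, and the same appeal to the lemmas for the separating facts. If anything, you are more explicit than the paper about the genuine remaining gap --- neither the paper nor your sketch actually carries out the lower-bound arguments $H_{LA}\notin CD_{1}(CFHA,t)$ and $H_{HF}\notin CD_{2}(CFHA,t)$, which you correctly identify as the only nontrivial content.
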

To show the power of the cooperating hexagonal array grammar system with the array rules controlled by permitting symbols, consider the following example of a language of set of all hexagons with parallel arms are equal in length over a one letter alphabet.
\begin{example}Consider the $pCDCFHAGS$\\
$G_4=(\{S,S',A,B,A',B',C,D,C',D',E',F',G,H,I,I',J,J',K,K',L,L',\\~~~~~~~~~~~M,M',N,N'O,R,T,T',X\},\{a,b\},p_1,p_2,S)$.\\
 The rules in the component $p_1$ are\\
\[
\begin{array}{l}
 1)\begin{array}{*{20}c}
   S & {\begin{array}{*{20}c}
   \#   \\
   \#   \\
\end{array}} & \#   \\
\end{array} \to \begin{array}{*{20}c}
   a & {\begin{array}{*{20}c}
   A  \\
   B  \\
\end{array}} & {S'}  \\
\end{array}, \\\\
 2)\left( {\begin{array}{*{20}c}
   A & {\begin{array}{*{20}c}
   \#   \\
   {}  \\
\end{array} \to \begin{array}{*{20}c}
   a & {\begin{array}{*{20}c}
   {A'}  \\
   {}  \\
\end{array},\left\{ B \right\}}  \\
\end{array}}  \\
\end{array}} \right), \\
\end{array}
\]
\[
\begin{array}{l}
 3)\left( {\begin{array}{*{20}c}
   B & {\begin{array}{*{20}c}
   {}  \\
   \#   \\
\end{array} \to \begin{array}{*{20}c}
   a & {\begin{array}{*{20}c}
   {}  \\
   {B'}  \\
\end{array},\left\{ {A'} \right\}}  \\
\end{array}}  \\
\end{array}} \right), \\\\
 4)\left( {A' \to A,\left\{ {B'} \right\}} \right), \\\\
 5)\left( {B' \to B,\left\{ A \right\}} \right), \\\\
 6)\left( {\begin{array}{*{20}c}
   {A'} & \#   \\
\end{array} \to \begin{array}{*{20}c}
   a & C  \\
\end{array},\left\{ {B'} \right\}} \right), \\\\
 7)\left( {\begin{array}{*{20}c}
   {B'} & \#   \\
\end{array} \to \begin{array}{*{20}c}
   a & D  \\
\end{array},\left\{ C \right\}} \right), \\\\
 8)\left( {\begin{array}{*{20}c}
   C & \#   \\
\end{array} \to \begin{array}{*{20}c}
   a & {C'}  \\
\end{array},\left\{ D \right\}} \right), \\\\
 9)\left( {\begin{array}{*{20}c}
   D & \#   \\
\end{array} \to \begin{array}{*{20}c}
   a & {D'}  \\
\end{array},\left\{ {C'} \right\}} \right), \\\\
 10)\left( {C' \to C,\left\{ {D'} \right\}} \right), \\\\
 11)\left( {D' \to D,\left\{ C \right\}} \right), \\\\
 12)\left( {\begin{array}{*{20}c}
   {\begin{array}{*{20}c}
   {}  \\
   \#   \\
\end{array}} & C & \#   \\
\end{array} \to \begin{array}{*{20}c}
   {\begin{array}{*{20}c}
   {}  \\
   {E'}  \\
\end{array}} & a & a  \\
\end{array},\left\{ D \right\}} \right), \\\\
13)\left( {\begin{array}{*{20}c}
   {\begin{array}{*{20}c}
   \#   \\
   {}  \\
\end{array}} & D & \#   \\
\end{array} \to \begin{array}{*{20}c}
   {\begin{array}{*{20}c}
   {F'}  \\
   {}  \\
\end{array}} & a & a  \\
\end{array},\left\{ {E'} \right\}} \right), \\\\
14)\left( {E' \to E,\left\{ {F'} \right\}} \right), \\\\
 15)\left( {F' \to F,\left\{ E \right\}} \right), \\\\
 16)\left( {\begin{array}{*{20}c}
   {\begin{array}{*{20}c}
   {}  \\
   \#   \\
\end{array}} & E  \\
\end{array} \to \begin{array}{*{20}c}
   {\begin{array}{*{20}c}
   {}  \\
   {E'}  \\
\end{array}} & a  \\
\end{array},\left\{ F \right\}} \right), \\\\
 17)\left( {\begin{array}{*{20}c}
   {\begin{array}{*{20}c}
   \#   \\
   {}  \\
\end{array}} & F  \\
\end{array} \to \begin{array}{*{20}c}
   {\begin{array}{*{20}c}
   {F'}  \\
   {}  \\
\end{array}} & a  \\
\end{array},\left\{ {E'} \right\}} \right), \\\\
 18)\left( {\begin{array}{*{20}c}
   {E'} & {\begin{array}{*{20}c}
   {}  \\
   F  \\
\end{array}} & H  \\
\end{array} \to \begin{array}{*{20}c}
   a & {\begin{array}{*{20}c}
   {}  \\
   a  \\
\end{array}} & T  \\
\end{array},\left\{ {S'} \right\}} \right), \\\\
19)\left( {\begin{array}{*{20}c}
   {S'} & {\begin{array}{*{20}c}
   \#   \\
   \#   \\
\end{array}} & \#   \\
\end{array} \to \begin{array}{*{20}c}
   a & {\begin{array}{*{20}c}
   G  \\
   H  \\
\end{array}} & {S'}  \\
\end{array},\left\{ T \right\}} \right), \\
\end{array}
 \]
 \[
\begin{array}{l}
 20)\left( {\begin{array}{*{20}c}
   T & {\begin{array}{*{20}c}
   \#   \\
   \#   \\
\end{array}}  \\
\end{array} \to \begin{array}{*{20}c}
   a & {\begin{array}{*{20}c}
   I  \\
   J  \\
\end{array}} & {T'}  \\
\end{array},\left\{ {S'} \right\}} \right), \\\\
 21)\begin{array}{*{20}c}
   T & {\begin{array}{*{20}c}
   \#   \\
   \#   \\
\end{array}}  \\
\end{array} \to \begin{array}{*{20}c}
   a & {\begin{array}{*{20}c}
   I  \\
   J  \\
\end{array}}  \\
\end{array}, \\\\
 22)\left( {\begin{array}{*{20}c}
   G & {\begin{array}{*{20}c}
   \#   \\
   {}  \\
\end{array}}  \\
\end{array} \to \begin{array}{*{20}c}
   a & {\begin{array}{*{20}c}
   G  \\
   {}  \\
\end{array}}  \\
\end{array},\left\{ H \right\}} \right), \\
 23)\left( {\begin{array}{*{20}c}
   H & {\begin{array}{*{20}c}
   {}  \\
   \#   \\
\end{array}}  \\
\end{array} \to \begin{array}{*{20}c}
   a & {\begin{array}{*{20}c}
   {}  \\
   H  \\
\end{array}}  \\
\end{array},\left\{ G \right\}} \right), \\\\
 24)\left( {\begin{array}{*{20}c}
   I & {\begin{array}{*{20}c}
   \#   \\
   {}  \\
\end{array}}  \\
\end{array} \to \begin{array}{*{20}c}
   a & {\begin{array}{*{20}c}
   {I'}  \\
   {}  \\
\end{array}}  \\
\end{array},\left\{ J \right\}} \right), \\
 25)\left( {\begin{array}{*{20}c}
   J & {\begin{array}{*{20}c}
   {}  \\
   \#   \\
\end{array}}  \\
\end{array} \to \begin{array}{*{20}c}
   a & {\begin{array}{*{20}c}
   {}  \\
   {J'}  \\
\end{array}}  \\
\end{array},\left\{ {I'} \right\}} \right), \\\\
 26)\left( {I' \to I,\left\{ {J'} \right\}} \right), \\\\
 27)\left( {J' \to J,\left\{ I \right\}} \right), \\\\
 28)\begin{array}{*{20}c}
   I & {\begin{array}{*{20}c}
   a  \\
   {}  \\
\end{array}} & \#   \\
\end{array} \to \begin{array}{*{20}c}
   a & {\begin{array}{*{20}c}
   a  \\
   {}  \\
\end{array}} & K  \\
\end{array}, \\
 29)\begin{array}{*{20}c}
   J & {\begin{array}{*{20}c}
   {}  \\
   a  \\
\end{array}} & \#   \\
\end{array} \to \begin{array}{*{20}c}
   a & {\begin{array}{*{20}c}
   {}  \\
   a  \\
\end{array}} & L  \\
\end{array}, \\\\
 30)\left( {\begin{array}{*{20}c}
   K & \#   \\
\end{array} \to \begin{array}{*{20}c}
   a & {K'}  \\
\end{array},\left\{ L \right\}} \right), \\\\
 31)\left( {\begin{array}{*{20}c}
   L & \#   \\
\end{array} \to \begin{array}{*{20}c}
   a & {L'}  \\
\end{array},\left\{ {K'} \right\}} \right), \\\\
 32)\left( {K' \to K,\left\{ {L'} \right\}} \right), \\\\
 33)\left( {\begin{array}{*{20}c}
   {L'} & \#   \\
\end{array} \to L,\left\{ K \right\}} \right), \\\\
 34)\left( {\begin{array}{*{20}c}
   K & {\begin{array}{*{20}c}
   {}  \\
   \#   \\
\end{array}}  \\
\end{array} \to \begin{array}{*{20}c}
   a & {\begin{array}{*{20}c}
   {}  \\
   {M'}  \\
\end{array}}  \\
\end{array},\left\{ L \right\}} \right), \\\\
 35)\left( {\begin{array}{*{20}c}
   L & {\begin{array}{*{20}c}
   \#   \\
   {}  \\
\end{array}}  \\
\end{array} \to \begin{array}{*{20}c}
   a & {\begin{array}{*{20}c}
   {N'}  \\
   {}  \\
\end{array}}  \\
\end{array},\left\{ {M'} \right\}} \right), \\\\
 36)\left( {M' \to M,\left\{ {N'} \right\}} \right), \\
\end{array}
\]
\[
\begin{array}{l}
 37)\left( {N' \to N,\left\{ M \right\}} \right), \\\\
 38)\left( {\begin{array}{*{20}c}
   M & {\begin{array}{*{20}c}
   {}  \\
   \#   \\
\end{array}}  \\
\end{array} \to \begin{array}{*{20}c}
   a & {\begin{array}{*{20}c}
   {}  \\
   {M'}  \\
\end{array}}  \\
\end{array},\left\{ N \right\}} \right), \\\\
 39)\left( {\begin{array}{*{20}c}
   N & {\begin{array}{*{20}c}
   \#   \\
   {}  \\
\end{array}}  \\
\end{array} \to \begin{array}{*{20}c}
   a & {\begin{array}{*{20}c}
   {N'}  \\
   {}  \\
\end{array}}  \\
\end{array},\left\{ {M'} \right\}} \right), \\\\
 40)\left( {\begin{array}{*{20}c}
   N & {\begin{array}{*{20}c}
   \#   \\
   {}  \\
\end{array}}  \\
\end{array} \to \begin{array}{*{20}c}
   a & {\begin{array}{*{20}c}
   {N'}  \\
   {}  \\
\end{array}}  \\
\end{array},\left\{ {M'} \right\}} \right), \\\\
 41)\left( {\begin{array}{*{20}c}
   N & {\begin{array}{*{20}c}
   {M'}  \\
   {}  \\
\end{array}}  \\
\end{array} \to \begin{array}{*{20}c}
   a & {\begin{array}{*{20}c}
   {N'}  \\
   {}  \\
\end{array}}  \\
\end{array},\left\{ {M'} \right\}} \right), \\\\
 42)\left( {\begin{array}{*{20}c}
   T & {\begin{array}{*{20}c}
   \#   \\
   \#   \\
\end{array}} & \#   \\
\end{array} \to \begin{array}{*{20}c}
   a & {\begin{array}{*{20}c}
   O  \\
   R  \\
\end{array}} & a  \\
\end{array},\left\{ X \right\}} \right), \\\\
 43)\left( {\begin{array}{*{20}c}
   R & {\begin{array}{*{20}c}
   {}  \\
   \#   \\
\end{array}}  \\
\end{array} \to \begin{array}{*{20}c}
   a & {\begin{array}{*{20}c}
   {}  \\
   R  \\
\end{array}}  \\
\end{array},\left\{ O \right\}} \right), \\\\
 44)\begin{array}{*{20}c}
   T & {\begin{array}{*{20}c}
   a  \\
   a  \\
\end{array}} & \#   \\
\end{array} \to \begin{array}{*{20}c}
   a & {\begin{array}{*{20}c}
   a  \\
   a  \\
\end{array}} & T  \\
\end{array}, \\\\
 45)\begin{array}{*{20}c}
   {S'} & {\begin{array}{*{20}c}
   E  \\
   F  \\
\end{array}} & \#   \\
\end{array} \to \begin{array}{*{20}c}
   a & {\begin{array}{*{20}c}
   a  \\
   a  \\
\end{array}} & T  \\
\end{array} \\
 \end{array}
\]
The rules in the component $p_2$ are given by
\[
P_2  = \left\{ {G \to a,H \to a,O \to a,T \to a,X \to a,R \to a}
\right\}
\]

\[
\begin{array}{*{20}c}
   {} & {} & {} & {} & a & {} & a & {} & a & {} & a & {} & a & {} & {} & {} & {}  \\
   {} & {} & {} & a & {} & a & {} & a & {} & a & {} & a & {} & a & {} & {} & {}  \\
   {} & {} & a & {} & a & {} & a & {} & a & {} & a & {} & a & {} & a & {} & {}  \\
   {} & a & {} & a & {} & a & {} & a & {} & a & {} & a & {} & a & {} & a & {}  \\
   a & {} & a & {} & a & {} & a & {} & a & {} & a & {} & a & {} & a & {} & a  \\
   {} & a & {} & a & {} & a & {} & a & {} & a & {} & a & {} & a & {} & a & {}  \\
   {} & {} & a & {} & a & {} & a & {} & a & {} & a & {} & a & {} & a & {} & {}  \\
   {} & {} & {} & a & {} & a & {} & a & {} & a & {} & a & {} & a & {} & {} & {}  \\
   {} & {} & {} & {} & a & {} & a & {} & a & {} & a & {} & a & {} & {} & {} & {}  \\
\end{array}
\]
\centerline{Figure 5} Starting with $S,$ repeated application of
the first five rules of the component $P_1$ in this order generate
$<a$ having equal arms, with $LU$ arm having non-terminal $A$ and
$LL$ arm having non-terminal $B.$ Once two rules $(A'\#\rightarrow
aC,\{B'\})$ and $(B'\#\rightarrow aD,\{C\})$ of component $P_1$
are used, the generations of these two arms end with terminal $a$
and then starts the generations of upper and lower arms of the
hexagon using rules (6) to (11). Then the right application of
rule (12) to (17) then (18), (19), $\dots,$ (45) subjected to the
permitting symbols will result to a hexagonal picture and finally
by the application of rules in $P_2,$ we get the required hexagon
over the one letter alphabet $\{a\}$ as in Figure 5.
\end{example}
 \par In the Siromoney matrix grammar (9) rectangular arrays are generated in
two phases; one in horizontal and the other in vertical. Further
it was extended by associating a finite set of rules in the second
phase of generation with each table having either right linear
non-terminal rules of the form $A\rightarrow B$ or right-linear
terminal rules of the form $A\rightarrow a$ and such array
languages are denoted by $TRML$ and $TCFML$ and we have a well
known result $TRML\subset TCFML,RML\subset CFML\subset TCFML$
(refer 11). Correspondingly it can be established for hexagonal
arrays. Here we compare $pCDCFHAGS$ with these classes.
\begin{theorem}$pCD_3(CFHA,t)\backslash TCFML\ne\phi.$
\end{theorem}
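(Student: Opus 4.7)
The plan is to exhibit a witness hexagonal array language $L^{*}$ that lies in $pCD_{3}(CFHA,t)$ but is not generated by any table-controlled context-free matrix hexagonal grammar, i.e.\ $L^{*}\notin TCFML$. A natural candidate, suggested by the run-up to the theorem, is the language produced by the system $G_{4}$ of the previous example: the set of all hexagons over $\{a\}$ whose three pairs of parallel arms are pairwise equal in length. Membership $L^{*}\in pCD_{3}(CFHA,t)$ is essentially free, since $G_{4}$ uses only two components and therefore $L^{*}\in pCD_{2}(CFHA,t)\subseteq pCD_{3}(CFHA,t)$; if a strictly three-component version is wanted one can split the terminating component $P_{2}$ of $G_{4}$ into two disjoint terminating components, guarded by fresh permitting symbols, without changing the generated language.

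The substantive step is to show $L^{*}\notin TCFML$. The strategy is to exploit the two-phase structure that defines $TCFML$ in the hexagonal setting, in direct analogy with the Siromoney matrix grammars for rectangular arrays cited in the excerpt. A $TCFML$ generator first produces, by context-free hexagonal rules, an intermediate one-dimensional ``seed'' of non-terminals along one chosen hexagonal axis; in the second phase it applies tables of right-linear rules that rewrite each non-terminal of the seed independently, thereby extending the picture only along a single transverse direction. Consequently, among the six arm lengths of the final hexagon, at most those that are determined by the first-phase axis can be correlated by the matrix mechanism; the lengths produced by the second-phase tables grow uniformly in the number of table applications and are not coordinated with the first-phase lengths by anything stronger than synchronous table choice.

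From this structural fact I would derive a contradiction by a pumping-style iteration argument. Assuming $L^{*}=L(G)$ for some $TCFML$ grammar $G$, pick a hexagon $X\in L^{*}$ whose common arm length $n$ exceeds the pumping constant determined by the right-linear tables of $G$. On the derivation of $X$ in $G$, isolate a block of second-phase table applications whose repetition is legal, and apply the block one extra time; this yields a new derivable array $X'$ in which one arm (parallel to the iterated direction) has length $n+k$ for the pumping period $k$, while the opposing parallel arm still has length $n$. Since the three-way equality of parallel arms fails for $X'$, we obtain $X'\in L(G)\setminus L^{*}$, contradicting $L(G)=L^{*}$.

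The main obstacle is making the second step rigorous: one must pin down exactly which arm lengths are coupled by first-phase rules and which by second-phase tables, and verify that an admissible block of table applications exists whose iteration stretches a single arm without symmetrically stretching its parallel partner. Once the table format for hexagonal right-linear rules is explicit, this is essentially a standard right-linear pumping argument, but the book-keeping across the three pairs of parallel arms of a hexagon is where care is required; the counting/iteration argument itself is then routine.
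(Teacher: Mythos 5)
Your proposal diverges from the paper in its choice of witness: the paper does not reuse the hexagon language of $G_4$ at all, but builds a new three-component system $G_5$ over $\{a,b,c\}$ generating left-arrowhead-shaped arrays whose middle horizontal arm consists of $a$'s while the surrounding cells are $b$'s, and argues that, treating $b$ as blank, such labelled arrays cannot arise from the two-phase table mechanism. That witness is chosen precisely because the obstruction is a \emph{labelling} constraint (a distinguished row of $a$'s whose position and length are tied to the arms), not merely a \emph{shape} constraint. Your witness $L^{*}$ (hexagons with equal parallel arms over a one-letter alphabet) is riskier: shape constraints of the form ``two arms are equal'' are exactly what the two phases of a tabled matrix grammar can enforce --- a context-free first phase can equate the two arms it lays down along its axis, and synchronized tables automatically equate all arms grown in the second phase, since every table application advances every growth point by one cell. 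Before committing to $L^{*}$ you would have to rule out that the three independent parameters $|U|=|L|$, $|LU|=|RL|$, $|RU|=|LL|$ cannot be distributed between the two phases in some such way; nothing in your sketch addresses this.

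The concrete gap is in your pumping step. You claim that iterating a block of second-phase table applications produces an array in which ``one arm has length $n+k$ while the opposing parallel arm still has length $n$.'' Tables do not act that way: a table rewrites \emph{every} active non-terminal of the sentential form, so repeating a block of table applications extends all second-phase arms by the same amount $k$, including the parallel partner of the arm you wish to stretch. The contradiction, if it exists, must come from decoupling a phase-one dimension from a phase-two dimension (as in the classical argument that squares are not in $TCFML$), not from stretching one arm of a parallel pair relative to the other. As written, the step you call ``routine'' is the step that fails, and you have not verified that for $L^{*}$ there is in fact a pair of correlated lengths, one fixed in phase one and one controlled by the table count, whose correlation the iteration breaks. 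To be fair, the paper's own non-membership argument is also only a one-line assertion, but it at least rests on a witness engineered so that the internal labelling, rather than the outline, is what the table mechanism cannot reproduce.
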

\begin{proof}
Consider the $pCDCFHAGS$
\\
$G_5=(\{S,A,B,C,A',B',C',X,Y,Z\},\{a,b,c\},p_1,p_2,p_3,S)$\\
where the components are\\
\[
\begin{array}{l}
 P_1  = \left\{ {1)\begin{array}{*{20}c}
   S & {\begin{array}{*{20}c}
   \#   \\
   \#   \\
\end{array}} & \#   \\
\end{array} \to \begin{array}{*{20}c}
   X & {\begin{array}{*{20}c}
   A  \\
   B  \\
\end{array}}  \\
\end{array},2)\left( {\begin{array}{*{20}c}
   A & {\begin{array}{*{20}c}
   \#   \\
   {}  \\
\end{array}}  \\
\end{array} \to \begin{array}{*{20}c}
   Y & {\begin{array}{*{20}c}
   {A'}  \\
   {}  \\
\end{array}}  \\
\end{array},\left\{ B \right\}} \right),} \right. \\\\
\;\;\;\;\;\;\;\;\;\; 3)\left( {\begin{array}{*{20}c}
   B & {\begin{array}{*{20}c}
   {}  \\
   \#   \\
\end{array}}  \\
\end{array} \to \begin{array}{*{20}c}
   Y & {\begin{array}{*{20}c}
   {}  \\
   {B'}  \\
\end{array}}  \\
\end{array},\left\{ {A'} \right\}} \right),4)\left( {A' \to A,\left\{ {B'} \right\}} \right),5)\left( {B' \to B,\left\{ A \right\}} \right), \\\\
 \left.
\begin{array}{*{20}c}
   {}  \\
   {}  \\
\end{array}
 \;\;\;\;\;\;\;{6)\left( {A' \to Y,B' \to Y,Z' \to Z,C' \to C} \right)} \right\} \\
 \end{array}
 \]
\[
\begin{array}{l}
 P_2  = \left\{ {\begin{array}{*{20}c}
   Y & \#  & {\#  \to \begin{array}{*{20}c}
   a & b & {Z,\begin{array}{*{20}c}
   X & \#  & {\#  \to \begin{array}{*{20}c}
   a & a & {c,Z \to \begin{array}{*{20}c}
   b & {Z,}  \\
\end{array}}  \\
\end{array}}  \\
\end{array}}  \\
\end{array}}  \\
\end{array}} \right. \\
 \left. \;\;\;\;\;\;\;\;\;\;{C \to \begin{array}{*{20}c}
   a & {C'}  \\
\end{array}} \right\} \\\\
 P_3  = \left\{ {Z \to \begin{array}{*{20}c}
   X & {a,C \to a}  \\
\end{array}} \right\} \\
 \end{array}
\]

\[
\begin{array}{*{20}c}
   {} & {} & {} & {} & a & {} & b & {} & b & {} & a & {}  \\
   {} & {} & {} & a & {} & b & {} & b & {} & a & {} & {}  \\
   {} & {} & a & {} & b & {} & b & {} & a & {} & {} & {}  \\
   {} & a & {} & b & {} & b & {} & a & {} & {} & {} & {}  \\
   a & {} & a & {} & a & {} & a & {} & {} & {} & {} & {}  \\
   {} & a & {} & b & {} & b & {} & a & {} & {} & {} & {}  \\
   {} & {} & a & {} & b & {} & b & {} & a & {} & {} & {}  \\
   {} & {} & {} & a & {} & b & {} & b & {} & a & {} & {}  \\
   {} & {} & {} & {} & a & {} & b & {} & b & {} & a & {}  \\
\end{array}
\]
\centerline{Figure 6} Except the rules $Z^{'}\rightarrow Z ,
C^{'}\rightarrow C$ in $P_{1}$ generates the $LU$ and $LL$ arm
 with a middle marker $X$ and the symbols $Y$ above and below $X$ in both arms and which are equal in number.
 The first two rules of $P_{2}$ changes the symbols in the left most cell in to $a$.Then the remaining rules of $P_{2}$
 and the last two rules of $P_{1}$ ($Z^{,}\rightarrow Z$,$C^{,}\rightarrow C$) and the rules of the component $P_{3}$
 generate the arms such that each cell in the middle arm in the horizontal direction is made up of $a's$ and all other
 cells above and below are made up of $b's$ except the leftmost arms.The generation of the cells finally terminates,
 yielding the rightmost cells are rewritten by $a's$. Thus a hexagonal array in the shape of a left arrowhead
 (describing as Figure(6) is generated).If we treat $'b'$ as blank, such arrays cannot be in $TCFML$ and
 which in turn proves that $pCD_{3}(HCFA,t)\setminus TCFML \neq \phi$.
 \end{proof}

\noindent{\bf Conclusion.} In this paper, the picture array
generating power of cooperating $CF$ hexagonal array grammar
systems endowed with permitting symbols are studied. It is seen
that the control mechanism which we here used namely the
'permitting symbols' is shape preserving in picture generation and
also it reduce size complexity.

\end{document}